\pgfplotsset{compat=newest}
\pgfplotsset{every axis/.append style={
		scaled x ticks = false, 
		x tick label style={/pgf/number format/.cd, fixed, fixed zerofill,
			int detect,1000 sep={},precision=3}
	}
}
\def\BState{\State\hskip-\ALG@thistlm}
\definecolor{darkpastelgreen}{rgb}{0.01, 0.75, 0.24}
\DeclareMathOperator*{\argmin}{argmin}
\def\plos{\mathcal{P}_\text{LoS}}
\newtheorem{theorem}{Theorem}
\begin{document}

\title{Energy-Constrained UAV Trajectory Design for Ground Node Localization}

\author{Hazem Sallouha, 
	 Mohammad Mahdi Azari, 
	 and~Sofie~Pollin 
	 \\
\IEEEauthorblockA{Department of Electrical Engineering - TELEMIC, KU Leuven\\
Email: hazem.sallouha@esat.kuleuven.be \\
}}

\maketitle
\begin{abstract}
The use of aerial anchors for localizing terrestrial nodes has recently been recognized as a cost-effective, swift and flexible solution for better localization accuracy, providing localization services when the GPS is jammed or satellite reception is not possible. In this paper, the localization of terrestrial nodes when using mobile unmanned aerial vehicles (UAVs) as aerial anchors is presented. We propose a novel framework to derive localization error in urban areas. In contrast to the existing works, our framework includes height-dependent UAV to ground channel characteristics and a highly detailed UAV energy consumption model. This enables us to explore different trade-offs and optimize UAV trajectory for minimum localization error. In particular, we investigate the impact of UAV altitude, hovering time, number of waypoints and path length through formulating an energy-constrained optimization problem. Our results show that increasing the hovering time decreases the localization error considerably at the cost of a higher energy consumption. To keep the localization error below 100m, shorter hovering is only possible when the path altitude and radius are optimized. For a constant hovering time of 5 seconds, tuning both parameters to their optimal values brings the localization error from 150m down to 65m with a power saving around 25\%.
\end{abstract}
\begin{IEEEkeywords}
Unmanned aerial vehicle (UAV), localization, received signal strength, trajectory planning, optimal altitude
\end{IEEEkeywords}

\IEEEpeerreviewmaketitle

\section{Introduction}

\subsection{Motivation}
Location-aware services have been acknowledged as an indispensable functionality for a vast majority of wireless communication applications. In fact, location information can be exploited in different layers, from communication aided purposes to the application level where location information is needed to meaningfully interpret the collected data \cite{dammann2013where2}. To this end, the global positioning system (GPS) is used for outdoor applications where it provides a satisfactory performance. However, GPS is known of its expensive cost and vulnerability to jamming \cite{alshrafi2014compact}. Therefore, alternative localization techniques have attracted considerable research focus over the past decade.

Several ground anchor based localization techniques have been extensively studied in the literature \cite{han2016survey}. In particular, the received signal strength (RSS) technique is attractive due to its intrinsic simplicity and the fact that the RSS estimation functionality is readily available in all chipsets \cite{Zanella}. However, the variation around the mean signal power due to shadowing significantly affects the reliability of this method. This is particularly important in urban areas where the shadowing effect is more severe and hence the localization accuracy drops remarkably. To address this issue, unmanned aerial vehicles (UAVs) deployed as aerial anchors is an emerging solution in order to localize ground devices \cite{perazzo,pinotti,hazem2,Rubina}. The main benefits of UAV anchors are their higher probability of line-of-sight (LoS) with ground terminals and less shadowing effect at higher altitudes \cite{hourani2}. Therefore, aerial anchors potentially are capable of resolving the main drawback of ground node localization when using RSS technique. In fact, UAV anchors can combine the benefits of satellites with a good link probability of LoS and the advantages of ground anchors with a short link length and hence higher received signal strength.

On the other hand, UAVs are typically battery-limited which introduces an important challenge towards their deployment as aerial anchors. This fact restricts UAVs operational life-time and hence reduces the number of measurements that can be collected during their mission, which can negatively affects the accuracy of localization. In fact, depending on the hovering duration, speed of the UAV, and length of the path, the energy consumption of the UAV varies. Moreover, when the probability of LoS is low, e.g., at low altitudes, more measurement locations should be sampled due to sever shadowing effect. However, at lower altitudes the range is shorter and the link budget hence better. The range is also shorter when the UAV is circling around a given area, giving localization services only to a small area at the ground. There clearly is a trade-off between the area that can be served, the energy consumption of the UAV and the localization accuracy. These trade-offs given realistic path loss, shadowing and energy consumption models derive our study in this paper. 

\subsection{Related Works}

The localization problem using terrestrial anchors (TAs) is well investigated in the literature \cite{han2016survey,Zanella}. Moreover, several recent works addressed the case where aerial anchors such as UAVs are used for localizing terrestrial nodes (TNs) \cite{perazzo,pinotti,hazem2,Rubina}. The path planning of a single mobile UAV for localizing TNs is addressed in \cite{pinotti} and \cite{perazzo}. In Particular, authors in \cite{pinotti} defined a bound on the positioning error of terrestrial nodes (TNs). In \cite{perazzo}, on the other hand, path planning algorithms that allow a drone to measure and verify TNs positions securely are proposed. However, in both \cite{pinotti} and \cite{perazzo} round trip time is used for distance estimation which requires special waveforms and very reliable clock. Path planning when using RSS for localization is presented in \cite{Rubina}. A hybrid of static and adaptive paths is proposed to minimize the localization accuracy and the trajectory length. Nevertheless, in this work the RSS-distance relation is represented by a simplified model that does not consider the dependency of path loss and shadowing characteristics on the UAV altitude.

To the best of our knowledge \cite{hazem2} is the only report that takes into account the variation of path loss exponent and shadowing with UAV elevation angle in urban areas. In this study, the optimum positioning of multiple UAVs hovering at the same altitude is investigated. It has been shown that the UAVs altitude has a significant influence on the localization accuracy. In particular, UAVs at an optimum altitude provides localization accuracy remarkably higher than that obtained by ground anchors. However, this work requires multiple UAVs which is more expensive than using one UAV and requires communication coordination between them resulting in high interface complexity. Moreover, in \cite{hazem2} the UAVs energy consumption is not taken into account, which is of high importance due to their limited source of energy resulting in limited number of measurements; hence limited localization accuracy.

\subsection{Contribution and Paper Structure}

In this paper we investigate the localization of stationary TNs using a mobile aerial anchor. To this end, a rotary-wing UAV is deployed that collects RSS measurements at different waypoints. We propose a generic analytical framework that includes height-dependent path loss and shadowing effects for urban environments. Moreover, a model for UAV energy consumption which enables us to provide practical insights into the design of mobile UAV anchors is detailed. We formulate the optimization problem and study the impact of different design factors such as UAV's altitude, number of waypoints and hovering time. Furthermore, we characterize the localization coverage using the Cram\'{e}r-Rao lower bound (CRLB). Our main objective is to minimize the localization error for a given energy constraint via optimizing the UAV trajectory, which is a novel design framework that needs to jointly minimize the localization error and yet consider the on-board energy limits. Our results show that flying at the optimal altitude and trajectory radius that contains 4 waypoints brings the average localization error from around 140m to less than 70m for TNs uniformly distributed in a given area. Trajectories with three and four waypoints are examined in our simulations where in general the one with four waypoints shows better performance. Moreover, the results show that increasing the hovering time at waypoints can remarkably decrease the localization error at the cost of higher energy consumption. 


The rest of this paper is organized as follows. In Section \ref{system} we introduce system assumptions and the employed channel model. Section \ref{energy} discusses the UAV energy consumption model. Subsequently, we formulate the optimization problem for minimum localization error in Section \ref{optimization}. Section \ref{numerical} includes numerical results and guidelines for various design factors. Finally, the conclusion is presented in Section \ref{conclusion}.
\section{System Model} \label{system}
In this section we present the assumptions that our work is based on. Subsequently, the channel model is thoroughly explained.
\subsection{System Assumptions}
Consider a UAV flying at altitude $h$  and acting as a mobile aerial anchor to localize TNs. The on-board communication technology depends on specific application communication requirements. Appropriate technologies could be as advanced as LTE or WiFi, or as simple as LPWAN. We assume that the UAV follows a certain path that consists of a sequence of $waypoints$, $\boldsymbol{w} = \{ w_o, w_1, w_2, ..., w_m, w_o\}$, as illustrated in Fig. \ref{model}. Each path is characterized by its radius $R$ and the number of waypoints denoted as $M=m+1$. The direct distance between a UAV at the $i$-th waypoint and a TN to be localized is denoted by $d_i$. Moreover, as shown in the figure, $r_i$ and $\theta_i$ represent the horizontal distance and the elevation angle with regard to TN, respectively.

The first waypoint, namely $w_o$, is the home waypoint at which the UAV starts and ends the path. At each waypoint the UAV collects data from TNs from which RSS measurements are determined. Once the UAV has collected RSS measurements from a TN at three waypoints, it can determine the distances $d_i$, and consecutively the position of such a node by multilateration. Subsequently, the estimated location of the TN is updated by taking into account all the waypoints during the mission. A well-known natural representation of the RSS-distance relation is obtained from the path loss model equation.

\begin{figure}[t]
	\centering	\includegraphics[width=0.45\textwidth]{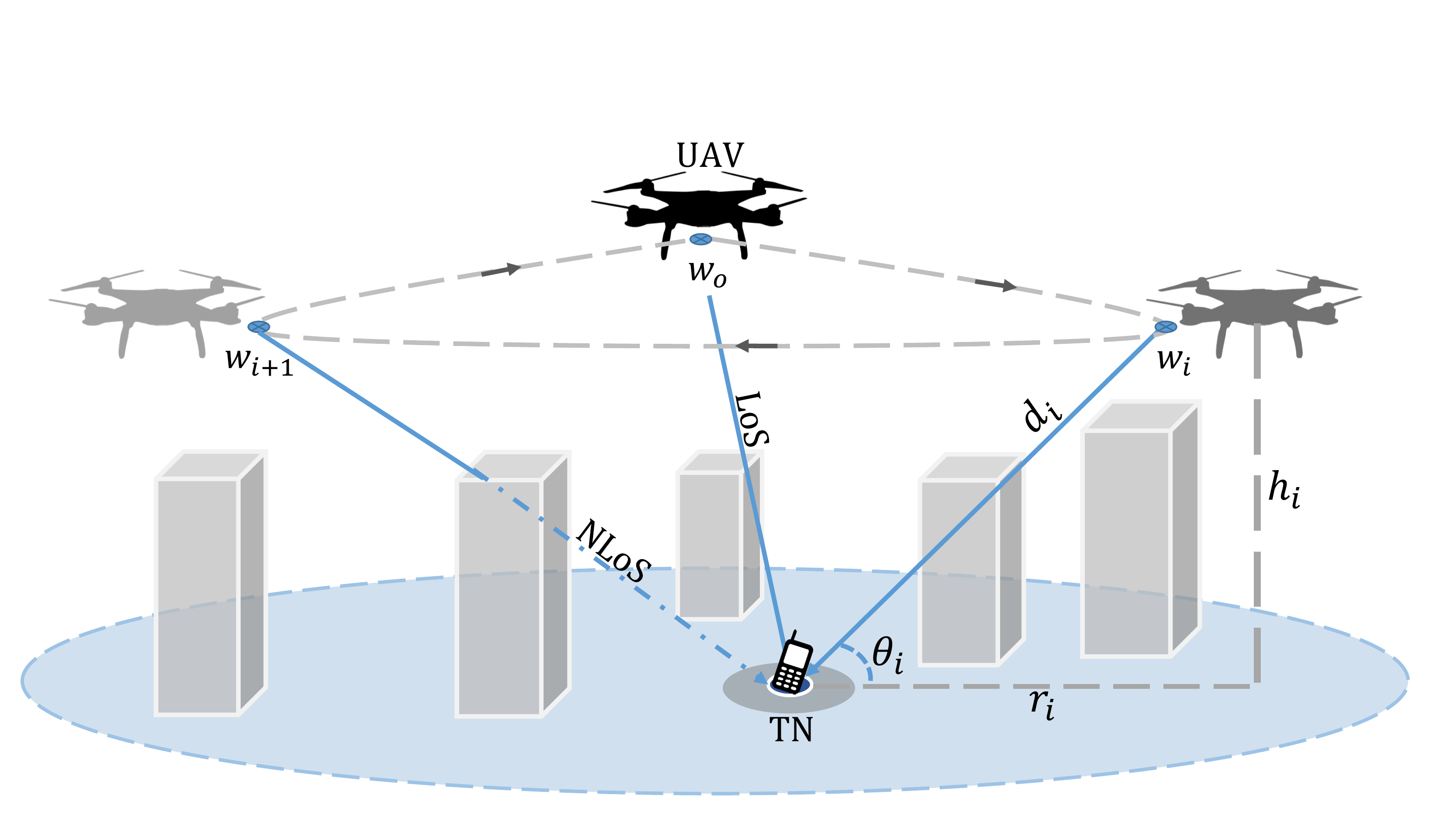} 
	\caption{An illustration of a UAV moving in a steady trajectory with waypoints $\boldsymbol{w} = \{ w_o, w_i, w_{i+1}, w_o\}$ and trying to localize TN.}
	\label{model}
\end{figure}

\subsection{Channel Model}

The communication channels between the UAV and the TNs are mainly LoS and non-LoS (NLoS). Following \cite{hourani}, we consider LoS and NLoS links separately along with their probabilities of occurrence. Accordingly, the path loss model for LoS and NLoS links in dB are respectively \cite{hourani}
\begin{subequations}\label{plmodel22}
	\begin{align}
	\text{PL}_{\text{LoS}} =20\log(d) + 20\log \left(\frac{4\pi f}{c}\right) + \psi_{\text{LoS}}, \label{plmodel1} \\ \label{plmodel2}
	\text{PL}_{\text{NLoS}} =20\log(d) + 20\log \left(\frac{4\pi f}{c}\right) + \psi_{\text{NLoS}},
	\end{align}
\end{subequations}
where $f$ is the carrier frequency, $c$ denotes the speed of light and $d$ is the distance between the UAV and the TN, given by $d = \sqrt{h^2 + r^2}$. Moreover, $\psi_{\text{LoS}}$ and $\psi_{\text{NLoS}}$ represent the variations around the mean\footnote{In \cite{hourani,hourani2} the authors called it excessive path loss. In this work the terms shadowing and excessive path loss are used interchangeably.} for LoS and NLoS, respectively. Both $\psi_{\text{LoS}}$ and $\psi_{\text{NLoS}}$ represent log-normal random variables \cite{hourani2}
\begin{eqnarray}
\psi_j \sim \mathcal{N}(\mu_j,\,\sigma_j^{2}(\theta)) ,~~~j \in \{\mathrm{LoS}\,,\mathrm{NLoS}\}
\end{eqnarray}
where $\mu_j$ is the mean and $\sigma_j^{2}(\theta)$ is the variance (in dB) given by
\begin{eqnarray}
\sigma_j(\theta) = a_j \exp{(-b_j\theta)},~~~j \in \{\mathrm{LoS}\,,\mathrm{NLoS}\}
\label{sigj}
\end{eqnarray}
with $a_j$ and $b_j$ being frequency and environment dependent parameters. The probability of having a LoS link between the UAV and TN can be expressed as
\begin{eqnarray}
\plos = \frac{1}{1 + a_o \exp{(-b_o \theta)}}, 
\label{Prlos}
\end{eqnarray}
where $a_o$ and $b_o$ are environment dependent constants and $\theta$ is the elevation angle shown in Fig. \ref{model}. Furthermore, the probability of NLoS is simply $\mathcal{P}_\text{NLoS}$ = $1 - \plos$.

\section{Energy Consumption of UAV Missions} \label{energy}

The total energy consumption of a UAV includes two components: the communication-related energy and the propulsion energy, which is required to ensure that the UAV remains aloft as well as for supporting its mobility. Practically, the communication-related energy is usually ignorable compared with the UAV’s propulsion energy, e.g., a few watts versus hundreds of watts \cite{Zeng}, and thus is ignored in this paper. In this work we consider UAV paths at which UAVs have two operational modes: the \textit{hovering mode} in which the UAV collects data from the TN and the \textit{forward flight} mode in which the UAV moves from one waypoint to another, in an straight line connecting the two waypoints.

In fact for both flight modes there are three main sources of power consumption \cite{Zeng,flight}:
\begin{itemize}
\item \textit{Blade Profile} is the power required just to turn the rotors' blade.
\item \textit{Parasitic power} which is the power used to overcome the drag force that results when UAV moves through air. This parasitic power is proportional to the cube of the UAV airspeed $v$, making it zero when hovering and very large at high speeds.
\item \textit{Induced power} is the power required to overcome the induced drag of lift creation, which is an aerodynamic drag force that occurs whenever a moving object redirects the airflow coming at it. The induced power is inversely proportional to the UAV airspeed. When hovering, all the airflow which is available for lift creation must be generated by the rotation of the main rotors. This means that a small amount of air must be considerably accelerated. However, when the UAV adds forward speed, it can achieve a higher mass flow through the rotor i.e., in forward flight the blade disc is getting more lift due to the increased downward airflow. Subsequently, less acceleration of air is needed from the motors to achieve the same thrust for lift.
\end{itemize}

\begin{figure}[t]
	\centering
	\input{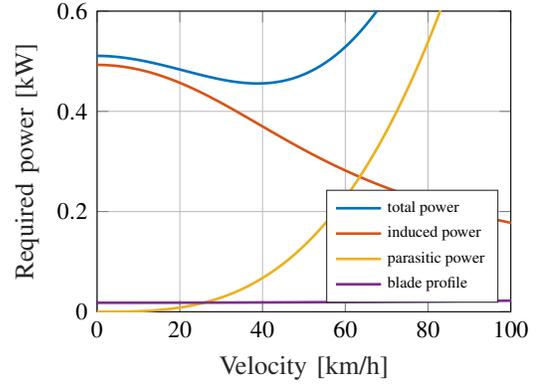}
	\caption{\small{Propulsion powers for a 5kg UAV versus the horizontal velocity}}
	\label{indPar}
\end{figure}

Now assuming a small tilt angle\footnote{The angle that represents the small tilt UAV does in forward flight mode.} ($<5^\circ$) of the UAV at forward flight mode, by following \cite{flight} and \cite{fflight}, the total power consumption can be written as:

\begin{eqnarray}
P_{ff} = \underbrace{mg v_{ind}}_{\text{induced}} + \underbrace{\frac{1}{2}\rho v^3 C_{ds}}_{\text{parasitic}} + \underbrace{k_o\left( 1+ 3\frac{v^2}{v_t^2} \right)}_{\text{blade profile}},
\label{tpower}
\end{eqnarray}
where $C_{ds}$ is a constant depends on the UAV drag coefficient and the reference area, $\rho$ is known as the air density, $g$ denotes the standard gravity, $m$ is the mass of the UAV in kilograms, $k_o$ is a constant depends on the dimensions of the blade and $v_t$ is the tip speed of the rotor blade. In (\ref{tpower}), moreover, $v_{ind}$ is the mean propellers' induced velocity in the forward fight mode, given by
\begin{eqnarray}
v_{ind} = \sqrt{\frac{-v^2 + \sqrt{v^4 + \left(\frac{mg}{\rho A_d}\right)^2 }}{2}},
\label{indV}
\end{eqnarray}
where $A_d$ is the area of the UAV. In case of hovering, i.e., \mbox{$v$ = 0}, the power required for a quad-copter UAV to hover is
\begin{eqnarray}
P_h = k_o + \sqrt{\frac{(mg)^3}{2\rho A_d }}.
\label{hEngy}
\end{eqnarray}

In Fig. \ref{indPar} we show the typical trends of the three propulsion powers individually along with the total power consumption versus UAV velocity $v$. As shown in the figure, when \mbox{$v$ = 0} the UAV consumes more power compared with forward flight at optimal velocity. Therefore, one has to minimize the number of waypoints at which the UAV hovers with $v$ = 0 and fly with the optimal forward velocity to minimize the energy consumption.

It is worth noting that the power consumption model introduced in this section is not a new aerodynamic model for the power consumption of rotary-wing UAVs. However, it is a simplified model that is suitable for researchers in communications theory. Interested readers may refer to \cite{flight} for more comprehensive theoretical derivations.

\section{Trajectory Design for Minimizing The Localization Error} \label{optimization}

Here, we formulate the optimization problem under consideration. First we introduce the localization error model and subsequently, present the coverage of the UAV during the mission. Second, we propose our framework of choosing the design parameters of the trajectory in order to minimize the localization error. Trajectory design parameters are altitude, hovering time, number of waypoints and the straight distance between them. 

\subsection{Localization Error and Coverage}

Following the channel model presented in Section \ref{system} and assuming that $\mathcal{E}_{\text{LoS}}$ and $\mathcal{E}_{\text{NLoS}}$ respectively represent the localization errors corresponding to LoS and NLoS components, the average localization error can be written as
\begin{eqnarray}
\mathcal{E} = \plos(\theta)~\mathcal{E}_{\text{LoS}} + [1 - \plos(\theta)] \mathcal{E}_{\text{NLoS}}.
\label{Errtotal}
\end{eqnarray}
Without loss of generality we assume a 3D Cartesian coordinate in which the TN location is $(x_g, y_g, 0)$ whereas the UAV position at any given time is $(x_a, y_a, h)$. Consequently, given the estimated distance $\hat{r}_i$ and known projection $(x_a^{(i)}\,, y_a^{(i)})$ of the UAV at the $i$th waypoint, the position of the TN can be estimated by finding the point $(\hat{x}\,, \hat{y})$ that satisfies
\begin{eqnarray}
(\hat{x}, \hat{y}) = \argmin_{x,y} \bigg\{\sum_{i=1}^{M} \Big( \sqrt{(x_a^{(i)} - x_g)^2 + (y_a^{(i)} - y_g)^2} - {\hat{r}}_i  \Big)^2 \bigg\}.
\label{opt2}
\end{eqnarray}
where $\hat{r_i} = \sqrt{\hat{d}_i - h^2}$ and $\hat{d}_i$ is the estimated distance obtained from (\ref{prx}) for LoS or NLoS link. Now, for an estimated location $(\hat{x}\,, \hat{y})$ of a TN, the localization error is expressed as
\begin{eqnarray}
\mathcal{E}_j = \hspace{0.1cm} \parallel {\hat{\boldsymbol{r}}} - \boldsymbol{r} \parallel \hspace{0.1cm} =  \sqrt{\sum_{i=1}^{M} \mid\hat{r}_i - r_i \mid^2}, ~~~j \in \{\mathrm{LoS}\,,\mathrm{NLoS}\}
\label{locoErr}
\end{eqnarray}
where $\boldsymbol{r} = [r_1, r_2, ..., r_M]$, ${\hat{\boldsymbol{r}}} = [\hat{r}_1, \hat{r}_2, ..., \hat{r}_M]$ and $\lVert.\lVert$ represents the euclidean distance.

In order to localize a given TN and to exploit the benefits of the trajectory, it must present in the coverage region of the UAV at all waypoints. Mathematically speaking, the UAV coverage for the distance estimator bounded by CRLB at the $i$-th waypoint is written as
\begin{eqnarray}
r_c^{(i)} = r|_{\sigma(\hat{d}) = \delta \sigma_{CRLB}},
\label{locCov}
\end{eqnarray}
where $\sigma(\hat{d})$ is the standard deviation of the distance estimator, $\delta$ is the localization coverage factor and $\sigma_{CRLB}$ denotes the CRLB. The CRLB is known as one of the most important performance benchmarks for ranging estimators \cite{van2004detection}. Based on the channel model considered in this work, the average CRLB is expressed as
\begin{eqnarray}
\sigma^2_{\text{CRLB}}(\theta) &=& \plos^2(\theta)~\sigma^2_{\text{CRLB}\,|\,\text{LoS}}(\theta) \nonumber\\ 
&+& [1 - \plos(\theta)]^2 \sigma^2_{\text{CRLB}\,|\,\text{NLoS}}(\theta),
\label{crlbTotal}
\end{eqnarray}
where $\sigma_{\text{CRLB}\,|\,\text{LoS}}$ and $\sigma_{\text{CRLB}\,|\,\text{NLoS}}$ are defined in the following theorem.
\begin{theorem}
	The Cram\'{e}r-Rao lower bound for the RSS-based distance estimator using an aerial anchor is given by:
	\begin{eqnarray}
	\sigma_{\text{CRLB}\,|\,\text{LoS}} &\geq&  \frac{d \, \ln (10)}{20} \,\,a_{\text{LoS}}\exp(-\theta b_{\text{LoS}})
	\label{TcrlFinallos}
	\end{eqnarray}
	\begin{eqnarray}
	\sigma_{\text{CRLB}\,|\,\text{NLoS}} &\geq&  \frac{d \, \ln (10)}{20} \,\,a_{\text{NLoS}}\exp(-\theta b_{\text{NLoS}})
	\label{TcrlFinalnlos}
	\end{eqnarray}
	for LoS and NLoS, respectively.
\end{theorem}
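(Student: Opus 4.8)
The plan is to read each RSS sample as a single scalar Gaussian observation whose mean encodes the distance $d$ through the path-loss law (\ref{plmodel1}), and then to apply the scalar Cram\'er--Rao inequality: the variance of any unbiased distance estimator is at least the reciprocal of the Fisher information, so its standard deviation is at least the square root of that reciprocal. This reduces the whole statement to a one-parameter Fisher-information computation done separately for each link type.

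First I would fix the link type $j\in\{\mathrm{LoS},\mathrm{NLoS}\}$ and express the received power (in dB) as transmit power minus path loss. Since the shadowing term $\psi_j$ is the only random quantity, is additive, and is Gaussian in dB, the measurement $z_j$ is itself Gaussian,
\[
z_j \sim \mathcal{N}\big(g_j(d),\,\sigma_j^2(\theta)\big),\qquad g_j(d) = \kappa - 20\log(d) - \mu_j,
\]
where $\kappa$ absorbs the transmit power together with the constant $20\log(4\pi f/c)$ offset, and the variance $\sigma_j^2(\theta)$ is exactly the one specified in (\ref{sigj}). Thus all dependence on the unknown $d$ sits in the mean $g_j$.

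Next I would compute the Fisher information. For a Gaussian whose mean depends on the parameter and whose variance does not, the information collapses to $I_j(d) = \big(g_j'(d)\big)^2/\sigma_j^2(\theta)$. The single derivative required is $\partial_d\,[20\log(d)] = \frac{20}{d\ln(10)}$, so $g_j'(d) = -\frac{20}{d\ln(10)}$ and
\[
I_j(d) = \frac{400}{d^2\,\ln^2(10)\,\sigma_j^2(\theta)}.
\]
Inverting and taking the square root gives the estimator-standard-deviation bound $\frac{d\ln(10)}{20}\,\sigma_j(\theta)$, and substituting $\sigma_j(\theta)=a_j\exp(-b_j\theta)$ from (\ref{sigj}) produces the two displayed inequalities, one for LoS and one, verbatim, for NLoS.

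The genuinely delicate point is not the calculus but the modelling decision to hold the variance fixed while differentiating in $d$. Geometrically $\theta$ and $d$ are coupled through $\sin\theta = h/d$, so $\sigma_j(\theta)$ does in principle vary with $d$, and a fully general Gaussian Fisher information would pick up an extra term $\tfrac12(\partial_d\sigma_j^2)^2/\sigma_j^4$. The closed form in the statement corresponds to the standard RSS-ranging convention that the shadowing spread is treated as a known, slowly varying function of the separately observable elevation angle, so that the distance information is carried entirely by the mean and only the first information term survives. I would state this assumption explicitly at the start of the proof, since it is the conventional choice in the RSS-localization literature and is precisely what keeps the bound in the clean closed form claimed here.
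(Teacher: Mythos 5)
Your proposal is correct and follows essentially the same route as the paper's Appendix A: both treat the dB-scale received power as Gaussian with the unknown $d$ entering only through the $-20\log(d)$ mean term and with variance $\sigma_j^2(\theta)$ held fixed, so your Fisher-information shortcut $I_j(d)=\bigl(g_j'(d)\bigr)^2/\sigma_j^2(\theta)$ is exactly the expectation of the squared score that the paper computes explicitly from the shifted PDF in (\ref{Pr-pdf})--(\ref{crlbDiff}). Your closing remark that the clean closed form relies on not differentiating $\sigma_j(\theta)$ with respect to $d$ (despite the geometric coupling $\sin\theta = h/d$) makes explicit an assumption the paper leaves implicit, which is a worthwhile clarification but not a different proof.
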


\begin{proof}
	The proof is given in Appendix A
\end{proof}

\begin{figure}[t]
	\input{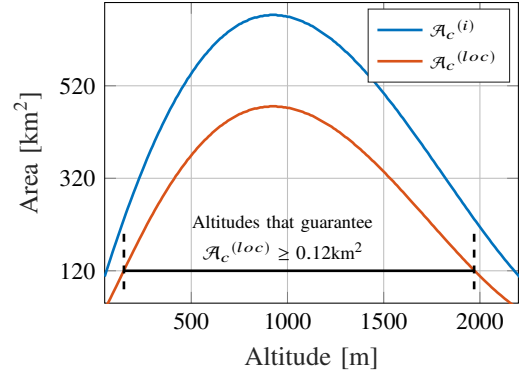}
	\caption{\small{Localization coverage versus UAV altitude. $M = 3$ and $R = 120$m.}}
\label{LocCover}
\end{figure}

Now the radius $r_c^{(i)}$ in (\ref{locCov}) represents a coverage area ${\mathcal{A}_c}^{(i)}$. Consequently, one can define the \textit{localization coverage area} as the intersection of the UAV coverage areas at all trajectory waypoints in which the distance estimation error is bounded by a factor of the Cram\'{e}r-Rao lower bound. Mathematically, for a trajectory with $M$ waypoints it is given by
\begin{eqnarray}
{\mathcal{A}_c}^{(loc)} = \bigcap_{i=1}^M {\mathcal{A}_c}^{(i)}.
\label{totCov}
\end{eqnarray}
For a trajectory of three waypoints with identical coverage radius, i.e., $r_c^{(1)} = r_c^{(2)} = r_c^{(3)} = r_c$ , a closed-form expression for the intersection area is given by \cite{fewell2006area}
\begin{multline}
{\mathcal{A}_c}^{(loc)} = \frac{\sqrt{3}}{4}c^2 + 3 \left( r_c^2 \arcsin \left(\frac{c}{2r_c}\right) - \frac{c}{4}\sqrt{4r_c^2 - c^2} \right) ,
\label{intArea}
\end{multline}
where 
\begin{eqnarray}
c = \sqrt{ 3r_c^2 - \frac{l^2}{2} - l\sqrt{3r_c^2 - \frac{3l^2}{4}} }
\end{eqnarray}
and $l$ is the distance between any two adjacent waypoints. As an example, Fig. \ref{LocCover} presents ${\mathcal{A}_c}^{(loc)}$ together with ${\mathcal{A}_c}^{(i)}$ for different UAV altitudes. The figure shows that for $h \in [150, 1900]$ one guarantees a minimum localization coverage of 0.12km$^2$.  

\subsection{Trajectory Design}

Consider a trajectory of radius $R$ and $M$ waypoints as shown in Fig. \ref{wpoints}, the distance between any two adjacent waypoints in the trajectory is given by
\begin{eqnarray}
l_{M} = 2 R \sin \left( \frac{\vartheta}{2} \right),
\label{trjLen}
\end{eqnarray}
where $\vartheta = \frac{2\pi}{M}$ is the angle between any two adjacent waypoints within the trajectory. Using (\ref{trjLen}) one can control the trajectory design parameters $M$ and $R$. For instance, in Fig. \ref{wpoints} we fixed $R$ and plotted the cases for $M=3$ and $M = 4$. Once $M$ and $R$ are defined, the energy consumption of the trajectory can be written as  
\begin{eqnarray}
{E}_t = \sum_{i = 1}^{M} t_h^{(i)}\,P_h^{(i)} + \sum_{j = 1}^{M} \frac{l_M^{(j)}}{v} P_{ff}^{(j)},
\end{eqnarray}
where $t_h^{(i)}$ is the hovering time at the $i$-th waypoint. Now assuming a limited on-board energy of $E_{th}$, in order to find the design parameters that minimize the localization error, the optimization problem can be written as
\begin{equation}
\begin{aligned}
& \underset{h, \,R, \,M, \,t_h}{\text{minimize}}
& & {\mathcal{E}} \\
& \text{subject to}
& & \tilde{\text{N}} = N, \\
&&& \sum_{i = 1}^{M} t_h^{(i)}\,P_h^{(i)} + \sum_{j = 1}^{M} \frac{l_M^{(j)}}{v} P_{ff}^{(j)} = E_{th}, \\
&&& M \geq 3, \\
\label{optErr2}
\end{aligned}
\end{equation}
where $N$ is the total number of TNs and $\tilde{\text{N}}$ is the number of TNs within the localization coverage. Based on the formulation provided in (\ref{optErr2}), various design parameters influence the localization error. Firstly, $h$, as changing $h$ affects both the shadowing and the path loss. From (\ref{sigj}), increasing $h$ decreases the variation around the mean which positively affects the localization accuracy. However, concurrently, the slop of the curve decreases due to the logarithmic RSS-distance relation. At low slop, small variations produce large localization error. Secondly, the radius, at which the trajectory is design around, is crucial for both the localization accuracy and the energy requirements. In one hand, large values of $R$ are required, preferably bigger than the radius of the serving area, so that the multilateration method can preform better \cite{perazzo}. On the other hand large values of $R$ implies longer trajectories and hence more energy consumption. Another important parameters in (\ref{optErr2}) are the hovering time and the number of waypoints. Intuitively, the more RSS measurements the better the localization accuracy. More RSS measurements can be collected by either adding more waypoints or increasing the hovering time at each. However, hovering with strictly zero speed is known to be energy-inefficient for rotary-wing UAVs. Thus, the energy-constrained trajectory design on $E_{th}$ needs to strike an optimal balance between maximizing the localization accuracy and minimizing the UAV’s propulsion energy consumption.

\begin{figure}[t]
	\centering	\includegraphics[width=0.5\textwidth]{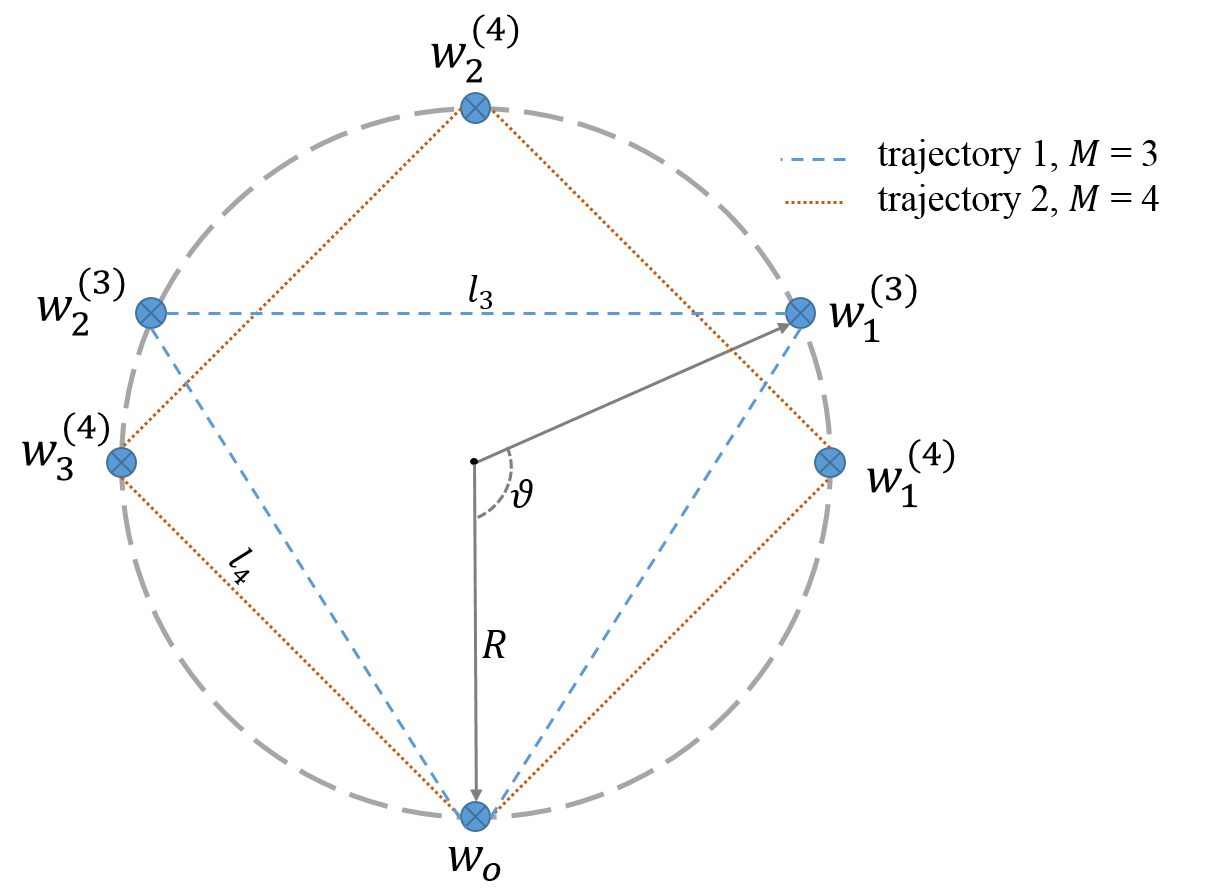} 
	\caption{An elevation view of UAV trajectories with 3 and 4 waypoints}
	\label{wpoints}
\end{figure}

\begin{table}[t]
	\caption{\small{Parameters list}}
	\vspace{-1.5em}
	\begin{center}
		\begin{tabular}{ | l || c | c |}
			\hline
			\textbf{Parameter} & \textbf{Description} &  value \\
			\hline
			\hline
			$M$ & Number of waypoints &  3 , 4 \\
			\hline
			$N$ &  Number of TNs &  100 \\
			\hline
			$f$ & Carrier frequency [GHz]&  2  \\
			\hline
			$A$ &  Total area of TNs [km$^2$] & 0.12 \\
			\hline
			$\mathcal{E}$ &  Localization error & -- \\
			\hline
			$E_{th}$ & Energy threshold & -- \\
			\hline
			$l_j$ &  Waypoints inter distance [m] & -- \\ 
			\hline
			$h$ & UAV's altitude & 200 \\
			\hline
			$h$ & Hovering time [s] & 5 \\
			\hline
			$R$ &  Trajectory radius & 120 \\
			\hline
			$\rho$ &  Air density & 1.225 \\ 
			\hline
			$m$ &  UAV mass [kg] & 5 \\ 
			\hline
			$C_{ds}$ & drag and reference area coefficient  & 0.4 \\ 
			\hline
			$v_t$ &  Tip speed of the blade & 100 \\ 
			\hline
			$A_d$ &  UAV's surface area [m$^2$] & 0.25  \\ 
			\hline
			$v$ &  UAV's forward velocity [km$/$h] & 40 \\ 
			\hline
			$k_o$ &  Blade dimension constant & 570 \\ 
			\hline
			$a_{\text{LoS}}$ & Shadowing constant  & 10\\
			\hline
			$b_{\text{LoS}}$ &  Shadowing constant & 2 \\
			\hline
			$a_{\text{NLoS}}$ & Shadowing constant & 30\\
			\hline
			$b_{\text{NLoS}}$ &  Shadowing constant & 1.7 \\
			\hline
			$a_o$ & $\plos$ constant  & 47 \\
			\hline
			$b_o$ &  $\plos$ constant & 20 \\
			\hline
			$\delta$ &  Localization coverage factor & 2 \\
			\hline
			
		\end{tabular}
	\end{center}
	\label{const}
	\vspace{-1em}
\end{table}

\section{Case Study: Trajectories with 3 and 4 Waypoints} \label{numerical}

In this section we investigate UAV trajectory design for localization, numerically. In our simulations we assume 100 TNs uniformly distributed in a circular area with a radius of 200m, centered at $(x,\, y) = (0,\,0)$. We consider a system communication frequency of 2GHz. Moreover, we assume that, the hovering time $t_h$ is equal at all waypoints. During the hovering time $t_h$ at any given waypoint, the UAV collects data from TNs from which RSS is measured in rate of 2 RSS samples per second. The relevant system parameters and their corresponding values are specified in details in Table \ref{const}, unless mentioned otherwise. In the following we investigate each design parameter of the proposed framework.

\begin{figure}[t]
	\centering
	\input{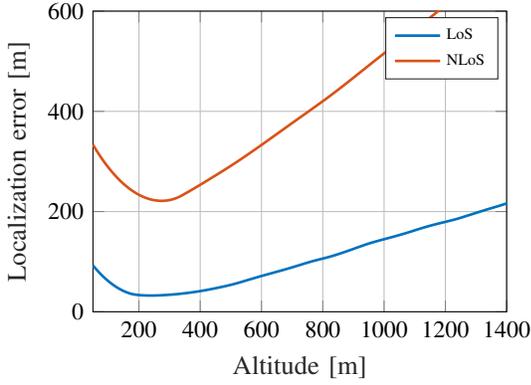}
	\caption{\small{Localization error for LoS and NLoS versus $h$. $R$ = 120m, $t_h = 5$ s and $M = 3$.}}
	\label{Locerr}
\end{figure}

\begin{figure}[t]
	\centering
	\input{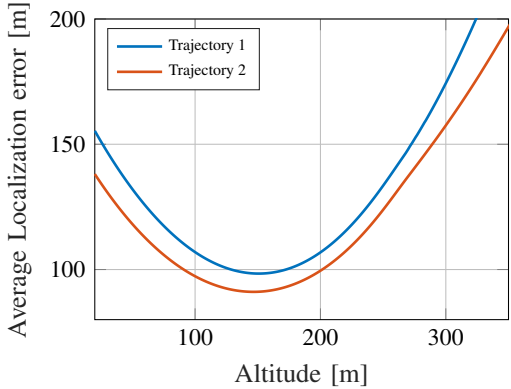}
	\caption{\small{Localization error for trajectory 1 and trajectory 2 versus $h$. $R$ = 120m, $t_h = 5$ s.}}
	\label{trajAlt}
\end{figure}

\begin{figure}[t]
	\centering
	\input{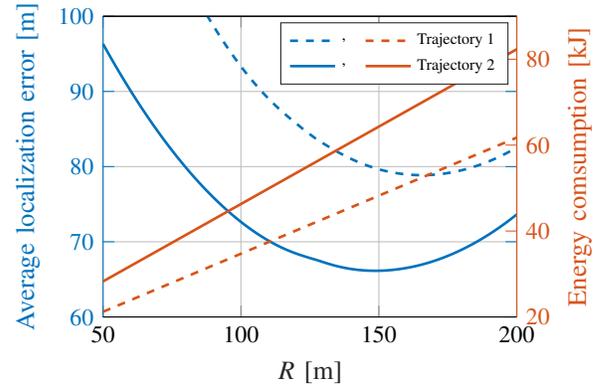}
	\caption{\small{Localization error of trajectory 1 with 3 waypoints and trajectory 2 with 4 waypoints. $h$ = 200m and $t_h = 5$ s.}}
	\label{trajloco}
\end{figure}

\begin{figure}[t]
	\centering
	\input{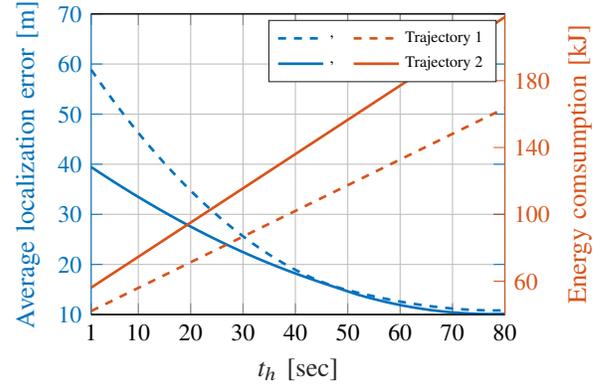}
	\caption{\small{Hovering time at each waypoint vs localization error for a TN at the origin. $R= 120$m, $h = 200$m.}}
	\label{ptotime}
\end{figure}

\subsubsection{UAV Altitude}
Evidently, Fig. \ref{Locerr} and \ref{trajAlt}, show that the UAV anchor outperforms the ground anchor 
when flying at the optimal altitude. In Fig. \ref{Locerr}, the localization error assuming LoS and NLoS for a trajectory with $M = 3$ is illustrated. As one can see, for both LoS and NLoS cases the error decreases and then increases with $h$. The error decreases with $h$ because of the exponentially decreasing variance of the $\psi_j$ with $h$ \cite{hourani2}. On the other hand, for large values of $h$, and hence $d$, the low resolution curve will be more vulnerable to excessive path loss effects (i.e., tiny variations in the path loss model curve will lead to a large estimation error), making localization accuracy inversely related to $h$. Finally the figure also shows that The localization error is always better for a LoS channel.

The localization error follows the same trend when it is averaged over LoS and NLoS using (\ref{Errtotal}) as shown in Fig. \ref{trajAlt}. The figure also compares the localization error of trajectories 1 with 3 waypoints, and 2 with 4 waypoints, shown in Fig. \ref{wpoints}. Interestingly, we can see that optimizing the trajectory's altitude can decrease the localization error from 150m to around 90m. Moreover, the trajectory with 4 waypoints provides better performance than the one with 3 waypoints for the same hovering time at each waypoint. 

\subsubsection{Trajectory Radius}
In Fig. \ref{trajloco} we present how the trajectory radius influences the localization error performance when $h$ and $t_h$ are fixed. Furthermore, the figure presents the energy requirements for each radius. The figure shows that high localization errors occur when $R$ is small (i.e., $R = 50$). This is due to the fact that, for multilateration, at small distances between the anchor points, a small estimation error in the distance will lead to a large error in the estimated location. Hence, increasing $R$ decreases the localization error from 150m to 80m in trajectory 1 and from 95m to 65m in trajectory 2 at optimal $R$. The cost here is the energy required for larger $R$. For instance, 55kJ and 70 kJ are required at optimal $R$ for trajectory 1 and trajectory 2, respectively.

\subsubsection{Hovering Time}
Fig. \ref{ptotime} shows the localization error as a function of hovering time. Increasing the hovering time implies more RSS measurements collected which is very beneficial to improve the accuracy at the cost of increased energy consumption. As one can see in the figure, after 40 seconds the two trajectories approach the same behavior based on our simulations. An interesting finding here, is that optimizing the trajectory matters more in the cases of low hovering time and hence, limited energy. Otherwise, a trajectory with three waypoints would be enough if the energy on-board can supports longer hovering.

\subsubsection{Number of Waypoints}

The number of waypoints is yet another important parameter in trajectory planning for localization of TNs. In general, adding more waypoints improves the system performance as shown in Fig. \ref{trajAlt}, Fig. \ref{trajloco} and Fig. \ref{ptotime}. Nevertheless, injecting more waypoints in the trajectory implies a longer total hovering time and longer traveling distances for the UAV, leading to a higher energy consumption.

\section{Conclusion} \label{conclusion}
In this paper, we studied the use of a moving UAV for ground node localization in urban environments. We proposed a generic analytical framework that enables us to study various trade-offs when designing UAV networks for localization. We provided a qualitative and quantitative understanding of how the design parameters such as UAV altitude, hovering time, traveling distance and number of waypoints affect the localization service while considering the on-board energy-constrained. Our findings show that mobile UAV anchor is capable of providing a desirable localization service when design parameters are optimized, outperforming that obtained by ground anchors. In particular we showed that the UAV altitude has a major influence on the localization accuracy compared to other design parameters. However, for a short hovering time the trajectory optimization is more crucial.



%

\appendices
\section{Proof of  Theorem 1}
The time-averaged received power (in dBm) for $j$ $\in$ \{LoS, NLoS\} can be written as
\begin{eqnarray}
\mathcal{P}_{r}^{(j)} = - 20\log(d) - K - \, \psi_{{j}} + C,
\label{prx}
\end{eqnarray}
where C is a constant (in dBm) which depends on the transmit power and received power to RSS transduction and $K = 20\log \left(\frac{4\pi f}{c}\right)$. Note that for a given distance $d$, the time-averaged received power given in (\ref{prx}) is a stochastic variable following a shifted version of the probability density function (PDF) of $\psi(\theta)$. Accordingly, the PDF of $\mathcal{P}_{r}$ conditioned on $d$ and $\theta$ is given by
\begin{eqnarray}
f_{\mathcal{P}_{r}|d,\theta}^{({j})}(w) = f_{\psi | d,\theta}( -w - 20\log(d) - K + C),
\label{Pr-pdf}
\end{eqnarray}
where $w$ is an auxiliary variable. The CRLB of the estimated distance denoted as $\hat{d}$ is then expressed as \cite{van2004detection}
\begin{eqnarray}
\sigma_{\text{CRLB}}^2 \geq \frac{1}{E_{\mathcal{P}_{r}|d,\theta}\bigg\{ \bigg[ \frac{\partial}{\partial d} \ln f_{\mathcal{P}_{r}|d,\theta}(w) \bigg]^2 \bigg\}},
\label{cr}
\end{eqnarray}
where $E_{\mathcal{P}_{r}|d,\theta}\{.\}$ is the expectation conditioned on $d$ and $\theta$. Using the PDF given in (\ref{Pr-pdf}), it is straightforward to show that
\begin{eqnarray}
\frac{\partial}{\partial d} \ln f_{\mathcal{P}_{r}|d,\theta}^{({j})}(w) &=& [-w - 20\log(d) - K + C] \nonumber\\
&\times& \frac{20}{d~\ln(10)~\sigma^2_{{j}}(\theta)}
\label{crlbDiff}
\end{eqnarray}
We now substitute (\ref{sigj}) in (\ref{crlbDiff}) and subsequently (\ref{crlbDiff}) in (\ref{cr}). Then, after simplifications, one obtains (\ref{TcrlFinallos}) and (\ref{TcrlFinalnlos}) for LoS and NLoS respectively.


\section*{Acknowledgment}

This work was funded by the SESAR project, PercEvite.

\bibliographystyle{ieeetr} 
\bibliography{IEEEbib}

\end{document}